\documentclass[conference]{IEEEtran}
\usepackage{epsfig,epstopdf} % COMMENT FOR ARXIV SUBMISSIONS
\usepackage{amsmath,amssymb,amsthm,cite,graphicx,array}
\theoremstyle{definition}
\newtheorem{theorem}{Theorem}
\newtheorem{lemma}{Lemma}

\newtheorem{note}{Note}
\newtheorem{example}{Example}
\usepackage{color}

\title{Reduced Complexity Index Codes and Improved Upperbound on Broadcast Rate for Neighboring Interference Problems}

\begin{document}

\author{Mahesh~Babu~Vaddi~and~B.~Sundar~Rajan\\ 
 Department of Electrical Communication Engineering, Indian Institute of Science, Bengaluru 560012, KA, India \\ E-mail:~\{vaddi,~bsrajan\}@iisc.ac.in}
 
\maketitle
%%%%%%%%%%%%%%%%%%%%%%%
\begin{abstract}
A single unicast index coding problem (SUICP) with symmetric neighboring interference (SNI) has $K$ messages and $K$ receivers, the $k$th receiver $R_{k}$ wanting the $k$th message $x_{k}$ and having the interference with $D$ messages after and $U$ messages before its desired message. Maleki, Cadambe and Jafar studied SUICP(SNI) because of its importance in topological interference management problems. Maleki \textit{et. al.} derived the lowerbound on the broadcast rate of this setting to be $D+1$. In our earlier work, for SUICP(SNI) with arbitrary $K,D$ and $U$, we defined set $\mathbf{S}$ of 2-tuples and for every $(a,b) \in \mathbf{S}$, we constructed $b$-dimensional vector linear index code with rate $D+1+\frac{a}{b}$ by using an encoding matrix of dimension $Kb \times (b(D+1)+a)$. In this paper, we use the symmetric structure of the SUICP(SNI) to reduce the size of encoding matrix by partitioning the message symbols. The rate achieved in this paper is same as that of the existing constructions of vector linear index codes. More specifically, we construct $b$-dimensional vector linear index codes for SUICP(SNI) by partitioning the $Kb$ messages into $b(U+1)+c$ sets for some non-negative integer $c$. We use an encoding matrix of size $\frac{Kb}{b(U+1)+c} \times \frac{b(D+1)+a}{b(U+1)+c}$ to encode each partition separately. The advantage of this method is that the receivers need to store atmost $\frac{b(D+1)+a}{b(U+1)+c}$ number of broadcast symbols (index code symbols) to decode a given wanted message symbol. We also give a construction of scalar linear index codes for SUICP(SNI) with arbitrary $K,D$ and $U$. We give an improved upperbound on the braodcast rate of SUICP(SNI).

%Hence, the construction is useful for the receivers with small buffer size. This index code construction also identifies the receivers which will be able to decode their wanted messages instantly (without using buffers).
\end{abstract}
%%%%%%%%%%%%%%%%%
\section{Introduction and Background}
\label{sec1}

\IEEEPARstart {A}{n} index coding technique uses the braodacast nature of wireless medium to optimise the number of transmissions. An index coding problem, comprises a transmitter that has a set of independent messages and a set of receivers. Each receiver knows a subset of messages, called its side-information, and demands another subset of messages, called its want-set. The transmitter can take cognizance of the side-information of the receivers and broadcast coded messages, called the index code, over a noiseless channel. The objective is to minimize the number of coded transmissions, called the length of the index code, such that each receiver can decode its wanted messages using its side-information and the coded messages. 

The problem of index coding with side-information was introduced by Birk and Kol \cite{ISCO}. Ong and Ho \cite{OnH} classified the index coding problem depending on the demands and the side-information possessed by the receivers. An index coding problem is single unicast if the demand-sets of the receivers are disjoint and the cardinality of demand-set of every receiver is one.
%An index coding problem is unicast if the demand-sets of the receivers are disjoint.  Without loss of generality in a  unicast index coding problem any receiver demanding more than one message can be replaced with multiple receivers each demanding only one message and all such receivers having the same side-information of the original receiver. The resulting problem is called Single Unicast Index Coding Problem (SUICP). 

A single unicast index coding problem (SUICP) has $K$ messages $\{x_0,x_1 ,\ldots ,x_{K-1}\}$ and $K$ receivers $\{R_0 ,R_1$, $\ldots, R_{K-1}\}$ for some positive integer $K$. Receiver $R_k$ wants the message $x_k$ and knows a subset of messages in $\{x_0,x_1 ,\ldots,x_{K-1}\}$ as side-information. We assume that the messages belongs to a finite alphabet $\mathcal{B}$. In a single unicast index coding problem, the side-information is represented by a directed graph $G$ = ($V$,$E$) with $V=\{x_0,x_1,\ldots,x_{K-1}\}$ vertices and $E$ is the set of edges such that the directed edge $(x_i,x_j)\in E$ if receiver $R_{i}$ knows $x_{j}$. This graph $G$ for a given index coding problem is called the side-information graph. In \cite{YBJK}, Bar-Yossef \textit{et. al.} studied single unicast index coding problems. In this paper, we use $\mathcal{W}_k$ to denote want set and $\mathcal{K}_k$ to denote side-information of the receiver $R_k$. The messages which are neither wanted by nor known to $R_k$ is called interference $\mathcal{I}_k$ to $R_k$. 

The solution (includes both linear and nonlinear) of the index coding problem must specify a finite alphabet $\mathcal{B}_P$ to be used by the transmitter, and an encoding scheme $\varepsilon:\mathcal{B}^{K} \rightarrow \mathcal{B}_{P}$ such that every receiver is able to decode the wanted message from the $\varepsilon(x_0,x_1,\ldots,x_{K-1})$ and the known information. The minimum encoding length $l=\lceil log_{2}|\mathcal{B}_{P}|\rceil$ for messages that are $t$ bit long ($\vert\mathcal{B}\vert=2^t$) is denoted by $\beta_{t}(G)$. The broadcast rate of the index coding problem with side-information graph $G$ is defined \cite{ICVLP} as,
\begin{align*}
\beta(G) \triangleq   \inf_{t} \frac{\beta_{t}(G)}{t}.
\end{align*}

If $t = 1$, it is called scalar broadcast rate. For a given index coding problem, the broadcast rate $\beta(G)$ is the minimum number of index code symbols required to transmit to satisfy the demands of all the receivers. The capacity $C(G)$ for the index coding problem is defined as the maximum number of message symbols transmitted per index code symbol such that every receiver gets its wanted message symbols and all the receivers get equal number of wanted message symbols. The broadcast rate and capacity are related as \cite{MCJ}
\begin{center}	
$C(G)=\dfrac{1}{\beta(G)}$.
\end{center}  

In a vector linear index code $x_k=(x_{k,1},x_{k,2},\ldots,x_{k, p_k }) \in\mathbb{F}_q^{p_k},~x_{k,j} \in \mathbb{F}_q$ for 
$k \in [0:K-1]$ and $j \in [1:p_k]$ where $\mathbb{F}_q$ is a finite field with $q$ elements. In vector linear index coding setting, we refer $x_k \in \mathbb{F}_q^{p_k}$ as a message-vector or message and $x_{k,1},x_{k,2},\ldots,x_{k,p_k} \in \mathbb{F}_q$ as the message symbols. An index coding is a mapping defined as
\begin{align*}
\mathfrak{E}: \mathbb{F}^{p_0+p_1+\ldots+p_{K-1}}_q \rightarrow \mathbb{F}^N_q,
\end{align*}
where $N$ is the length of index code. The index code $\mathfrak{C}=\{(c_0,c_1,\ldots,c_{N-1}) \}$ is the collection of all images of the mapping $\mathfrak{E}$. We call the symbols $c_0$,$c_1,\ldots$,$c_{N-1}$ the code symbols,  which are the symbols broadcasted by the transmitter. If $p_0=p_1=\cdots=p_{K-1}=p$, then the index code is called symmetric rate $p$-dimensional vector linear index code. If $p_0=p_1=\cdots=p_{K-1}=1$, then the index code is called scalar index code.

A p-dimensional vector linear index code of length $N$ is represented by a matrix $\mathbf{L}$ $(\in \mathbb{F}_q^{pK \times N})$, where the $j$th column contains the coefficients of the $j$th coded transmission and the $(ip+j)$th row $L_{ip+j}$ $(\in \mathbb{F}_q^{1\times N})$ contains the coefficients used for mixing message $x_{i,j}$ in the $N$ transmissions for every $i \in [0:K-1]$ and $j \in [1:p]$. The broadcast vector is 
\begin{align*}
[c_0,c_1,\ldots,c_{N-1}]&=[\underbrace{x_{0,1},\ldots,x_{0,p}}_{x_0}\ldots \underbrace{x_{K-1,1},\ldots,x_{K-1,p}}_{x_{K-1}}]\mathbf{L}\\&=\sum_{i=0}^{N-1}\sum_{j=1}^{p}x_{i,j}L_{ip+j}.
\end{align*}

Example \ref{introex2} given below illustrates the advantage of vector linear index codes. 
\begin{example}
\label{introex2}
Consider the index coding problem with wanted message and side-information as given in Table below.
\begin{table}[ht]
\centering
\begin{tabular}{|c|c|c|}
\hline
$R_k$ & \textbf{$\mathcal{W}_k$} & \textbf{$\mathcal{K}_k$} \\
\hline
$R_0$ & $x_0$ & $x_1,x_3$\\
\hline
$R_1$ & $x_1$ & $x_2,x_3$  \\
\hline  
$R_2$ & $x_2$ & $x_0$\\
\hline
$R_3$ & $x_3$ & $x_1,x_2$  \\
\hline  
\end{tabular}
\end{table}

The side-information graph of this SUICP is given in Fig. \ref{ex1}.
\begin{figure}
\centering
\includegraphics[scale=0.45]{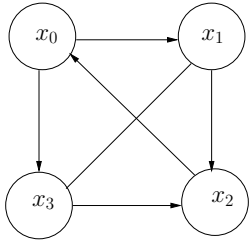}\\
\caption{side-information graph of SUICP given in Example \ref{introex2}}
\label{ex1}
\end{figure}
Let $x_{k,1},x_{k,2}$ be the two generations of the message symbol $x_k$ for $k \in [0:3]$.  A vector linear index code with symmetric rate $\frac{2}{5}$ for this index coding problem is 
\begin{align*}
(y_0,y_1,y_2,y_3,y_4)=\{&x_{0,1}+x_{1,1},~~x_{1,1}+x_{2,1},\\&x_{0,2}+x_{3,1},~~x_{3,1}+x_{2,2},\\&x_{1,2}+x_{3,2}\}.
\end{align*}

It is easy to see that from the five broadcast symbols $(y_0,y_1,y_2,y_3,y_4)$, every receiver can decode its two wanted message symbols by using the available side-information with them. 
\begin{note}
The optimal scalar linear index code length of the index coding problem given in Example \ref{introex2} is three. That is, atleast three transmissions are required by using scalar linear index code to receive one message symbol per receiver. However, by using two dimensional vector linear index code given in Example \ref{introex2}, every receiver receives one message symbol by using 2.5 transmissions.
\end{note}
\end{example}

The general form of the AIR matrix of size $m \times n$ for any given positive integers $m$ and $n(m \geq n)$ is shown in Fig. \ref{fig1}. It consists of several submatrices of different sizes as shown in Fig.\ref{fig1}. The description of the submatrices are as follows: Let $a$ and $b$ be two positive integers and $b$ divides $a$. The following matrix denoted by $\mathbf{I}_{a \times b}$ is a rectangular matrix
\begin{align}
\label{rcmatrix}
\mathbf{I}_{a \times b}=\left.\left[\begin{array}{*{20}c}
   \mathbf{I}_b  \\
   \mathbf{I}_b  \\
   \vdots  \\
   %\mathbf{I}_\lambda   \\
   \mathbf{I}_b 
   \end{array}\right]\right\rbrace \frac{a}{b}~\text{number~of}~ \mathbf{I}_b~\text{matrices}
\end{align}
and $\mathbf{I}_{b \times a}$ is the transpose of $\mathbf{I}_{a \times b}.$ 
%%%%%%%%%%%%%

For a given $m$ and $n,$  let $m-n= \lambda_0$ and
\begin{align}
\nonumber
%K&=\beta (K-D) +\lambda_0, \nonumber \\
n&=\beta_0 \lambda_0+\lambda_1, \nonumber \\
\lambda_0&=\beta_1\lambda_1+\lambda_2, \nonumber \\
\lambda_1&=\beta_2\lambda_2+\lambda_3, \nonumber \\
&~~~~~~\vdots \nonumber \\
\lambda_i&=\beta_{i+1}\lambda_{i+1}+\lambda_{i+2}, \nonumber \\ 
&~~~~~~\vdots \nonumber \\ 
\lambda_{l-1}&=\beta_l\lambda_l
\label{chain}
\end{align}
where $\lambda_{l+1}=0$ for some integer $l,$ $\lambda_i,\beta_i$ are positive integers and $\lambda_i < \lambda_{i-1}$ for $i=1,2,\ldots,l$.

\begin{figure*}
\centering
\includegraphics[scale=0.36]{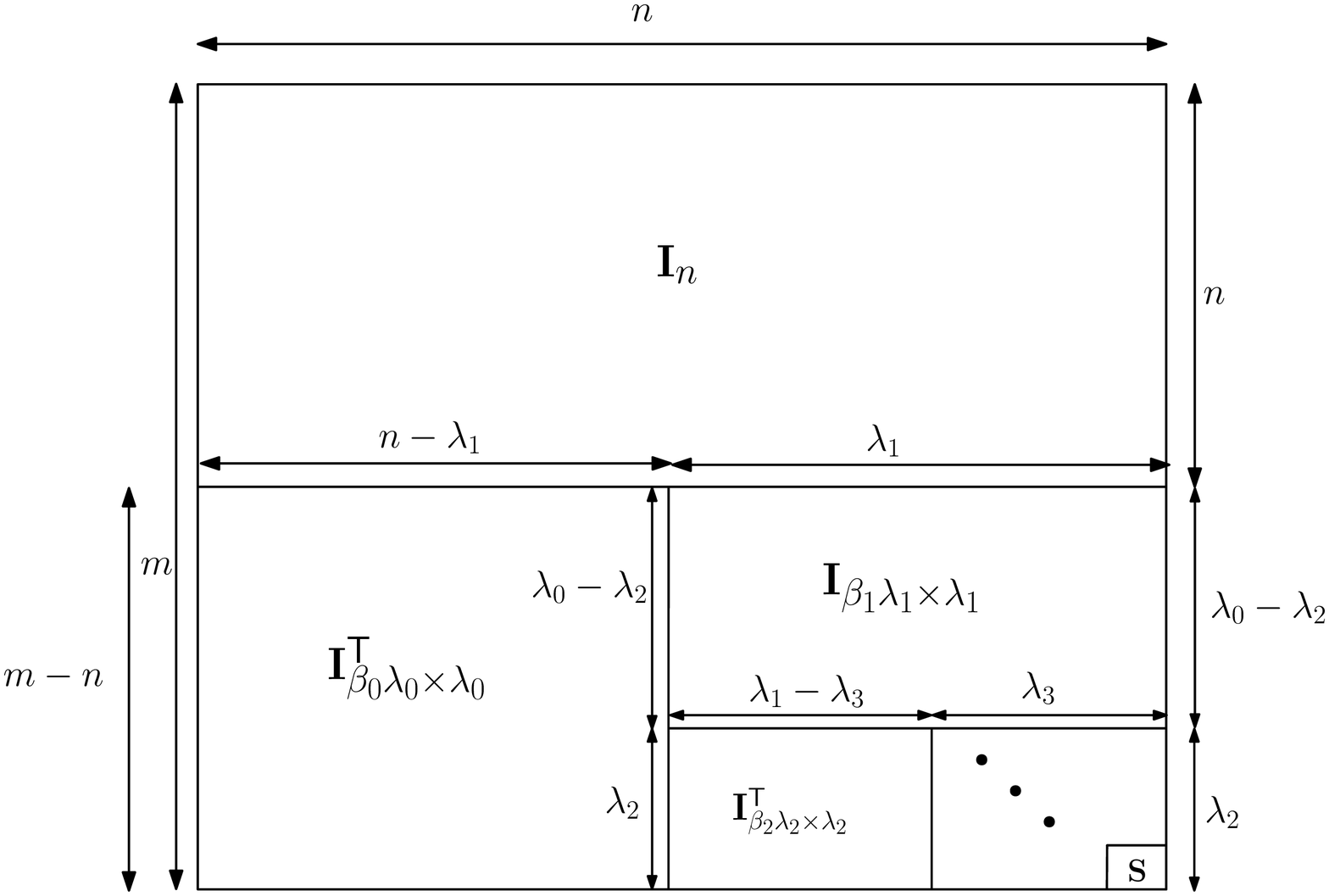}\\
~ $\mathbf{S}=\mathbf{I}_{\lambda_{l} \times \beta_l \lambda_{l}}$ if $l$ is even and ~$\mathbf{S}=\mathbf{I}_{\beta_l\lambda_{l} \times \lambda_{l}}$ otherwise.
\caption{AIR matrix of size $m \times n~(m \geq n)$.}
\label{fig1}
\end{figure*}

%%A single unicast index coding problems is symmetric if the side-information of every receiver is symmetric with respect to its wanted message. By using AIR matrices of size $m \times n$, we gave optimal and near-optimal index codes for various symmetric SUICPs.

\subsection{Symmetric neighboring interference single unicast index coding problem}
A symmetric neighboring interference single unicast index coding problem (SUICP(SNI)) with $K$ messages and $K$ receivers, each receiver has a total of $U+D<K$ interfering messages, corresponding to the $D$ messages after and $U$ messages before its desired message. In this setting, the $k$th receiver $R_{k}$ demands the message $x_{k}$ having the interference
%%%%%%%%
\begin{equation}
\label{antidote}
{I}_k= \{x_{k-U},\dots,x_{k-2},x_{k-1}\}\cup\{x_{k+1}, x_{k+2},\dots,x_{k+D}\}.
\end{equation}

The side-information of this setting is given by
\begin{align}
\label{sideinformation}
\mathcal{K}_{k}=(\mathcal{I}_{k} \cup x_{k})^c.
\end{align}

All the subscripts in SUICP(SNI) are to be considered $~\text{\textit{modulo}}~ K$. 
\subsection{Motivating Example}
The motivation for studying SUICP(SNI) can be understand from the topological interference management problem given in Fig.\ref{cell}. In this setting, there are five base stations (shown as black squares) and each base station wants to transmit a message to one receiver (shown as white squares). In this setting, every receiver sees interference from its neighboring base stations. This topological interference management problem can be modelled as SUICP(SNI) with $K=5,D=1$ and $U=1$.

\begin{figure}[h]
\centering
\includegraphics[scale=0.60]{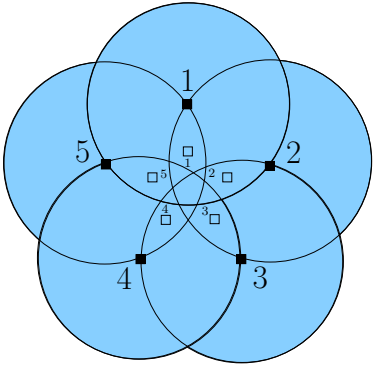}\\
\caption{SUICP(SNI) corresponding to TIM problem}
\label{cell}
\end{figure}

\subsection{Existing Results}

Maleki \textit{et al.} \cite{MCJ} found the capacity of SUICP(SNI) with $K\rightarrow \infty$. The capacity of SUICP(SNI) with $K\rightarrow \infty$ is 
\begin{align}
\label{capacityint}
C=\frac{1}{D+1}~\text{per~message}.
\end{align}

Maleki \textit{et al.} \cite{MCJ} proved the outerbound for the capacity of SUICP(SNI) for finite $K$. The outerbound for the finite $K$ is given by
\begin{align}
\label{outerbound}
C \leq \frac{1}{D+1}.
\end{align}

Blasiak \textit{et al.} \cite{ICVLP} found the capacity of SUICP(SNI) with $U=D=1$ by using linear programming bounds. The capacity of this SUICP(SNI) with $U=D=1$ is given by
\begin{align}
\label{cap5}
\frac{\left\lfloor \frac{K}{2}\right\rfloor}{K}. 
\end{align}

In \cite{VaR2}, we give a construction of binary matrices with a given size $m \times n~(m > n)$, such that any $n$ adjacent rows in the matrix are linearly independent over every field $\mathbb{F}_q$. We call these matrices as Adjacent Independent Row (AIR) matrices. In \cite{VaR5}, for SUICP(SNI) with arbitrary $K,D$ and $U$, we define a set $\mathcal{\mathbf{S}}$ of $2$-tuples as given below
\begin{align}
\label{set}
\mathbf{S}=\{(a,b):\text{gcd}(bK,b(D+1)+a)\geq b(U+1)\}
\end{align}
and we show that for every $(a,b) \in \mathcal{\mathbf{S}}$, the rate $D+1+\frac{a}{b}$ is achievable by using $b$-dimensional vector linear index codes and AIR matrices of size $Kb \times b(D+1)+a$. 
%%In \cite{VaR5}, we give an algorithm to compute the value of $a$ and $b$ in $\mathbf{S}$ such that $\frac{a}{b}$ is minimum. We refer these values of $a$ and $b$ as $a_{\text{min}}$ and $b_{\text{min}}$. We obtain an upperbound on the broadcast rate given by $R_{airm}=D+1+\frac{a_{\text{min}}}{b_{\text{min}}}$. 

Jafar \cite{TIM} established the relation between index coding problem and topological interference management problem. The SUICP(SNI) is motivated by topological interference management problems. The capacity and optimal coding results in index coding can be used in corresponding topological interference management problems.
%%%%%%%%%%%%%%%%%%%%%%%%%%%%%%%%%%%%%%%%%%
\subsection{Contributions}
The contributions of this paper are summarized below:
\begin{itemize}
\item We construct $b$-dimensional vector linear index codes for SUICP(SNI) by partitioning the $Kb$ messages into $b(U+1)+c$ sets for some non-negative integer $c$ satisfying the condition $\text{gcd}(Kb,b(D+1)+a)=b(U+1)+c$. We use $\frac{Kb}{b(U+1)+c} \times \frac{(b(D+1)+a)}{b(U+1)+c}$ size matrix to encode each partition separately. The advantage of this method is that the receivers need to store atmost $\frac{(b(D+1)+a)}{b(U+1)+c}$ received broadcast symbols to decode a given wanted message symbol.
\item This proposed index code construction identifies the receivers which will be able to decode their wanted messages instantly
(without using buffers).
\item We give a construction of scalar linear index codes for SUICP(SNI) with arbitrary $K,D$ and $U$. 
\item We give an improved upperbound on the braodcast rate of SUICP(SNI).
\end{itemize}

The paper is organized as follows. In Section \ref{sec3}, we give reduced complexity encoding for SUICP(SNI) by partitioning the message symbols. In Section \ref{sec6}, we give a scalar linear construction of index codes for SUICP(SNI). In Section \ref{sec7}, we give an improved upperbound on broadcast rate of SUICP(SNI). We summarize the paper in Section \ref{sec4}.
%%%%%%%%%%%%%%%%%%%%%%%%%%%%%%%%%%

\section{Partition Based Vector Linear Index Codes for SUICP(SNI) with arbitrary K,U and D}
\label{sec3}

In this section, we use partition of message symbols to obtain a simple linear index code for SUICP(SNI) with arbitrary $K$, $D$ and $U$. Let
\begin{align}
\label{set11}
\mathbf{S}=\{(a,b):\text{gcd}(bK,b(D+1)+a)\geq b(U+1)\}
\end{align}
In a $b$-dimensional vector linear index code, $x_k=(x_{k,1}~x_{k,2}\ldots x_{k,b}) \in \mathbb{F}^{b}_q$ for $k \in [0:K-1]$. Receiver $R_k$ wants to decode the $b$ message symbols $x_{k,1}~x_{k,2}\ldots x_{k,b}$ for every $k \in [0:K-1]$. In the $b$-dimensional vector linear index code construction for SUICP(SNI) given in \cite{VaR5}, the $Kb$ messages corresponding to $K$ receivers are linearly combined to give $b(D+1)+a$ index code symbols by using an AIR matrix of size $Kb \times (b(D+1)+a)$. In Theorem \ref{novcccthmv1}, we partition the $Kb$ message symbols into $b(U+1)+c$ sets for some non-negative integer $c$ such that each set comprises of $\frac{Kb}{b(U+1)+c}$ messages. Then, by using the symmetry in the SUICP(SNI), we show that we can use an AIR matrix of size $\frac{Kb}{b(U+1)+c} \times \frac{b(D+1)+a}{b(U+1)+c}$ to encode each partition separately.
\begin{theorem}
\label{novcccthmv1}
Consider a SUICP(SNI) with arbitrary $K,D$ and $U$. Let $(a,b) \in \mathbf{S}$ and $\text{gcd}(bK,b(D+1)+a)=b(U+1)+c$ for some $c \in Z_{\geq 0}$. Let $\frac{Kb}{b(U+1)+c}=t$ and $\frac{b(D+1)+a}{b(U+1)+c}=\gamma$. An index code for this SUICP(SNI) with rate $D+1+\frac{a}{b}$ is obtained by using a $t \times \gamma$ AIR matrix. 
\end{theorem}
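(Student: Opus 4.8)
The plan is to reduce the $Kb \times (b(D+1)+a)$ encoding of \cite{VaR5} to $b(U+1)+c$ independent copies of a $t \times \gamma$ encoding by exploiting the cyclic symmetry of SUICP(SNI). First I would index the $Kb$ message symbols by pairs $(k,j)$ with $k \in [0:K-1]$, $j \in [1:b]$, and linearly order them as $0,1,\ldots,Kb-1$ via $x_{k,j} \mapsto kb+j-1$ (or the ordering used in \cite{VaR5}). Writing $g = b(U+1)+c = \gcd(Kb, b(D+1)+a)$, I would partition $[0:Kb-1]$ into the $g$ residue classes modulo $g$; class $r$ contains exactly $t = Kb/g$ indices. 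The key structural observation to establish is that, because both the "window" of interference (of width $b(U+1)$ in the symbol ordering, up to the extra $c$) and the code length $b(D+1)+a$ are compatible with this modular partition — precisely because $g$ divides both $Kb$ and $b(D+1)+a$ — the side-information pattern, when restricted to any one residue class, is itself a (scaled) cyclic neighboring-interference pattern on $t$ symbols. So each partition is its own smaller index coding problem of the same type, and applying the AIR-matrix construction of \cite{VaR5}/\cite{VaR2} at size $t \times \gamma$ solves it.

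The steps, in order: (1) set up the ordering and the mod-$g$ partition, and check $g \mid Kb$ and $g \mid (b(D+1)+a)$ so that $t$ and $\gamma$ are integers (immediate from the hypothesis $(a,b)\in\mathbf S$ and the gcd equation). (2) Show that within residue class $r$, consecutive members of the class (in cyclic order on $Kb$) differ by exactly $g$, and that the interference set $I_k$ translated into symbol indices lies within a band that, intersected with class $r$, is a contiguous block of the appropriate size — this is where the $b(U+1)+c$ count of partitions is forced, and where the non-negativity of $c$ matters. (3) Invoke the AIR-matrix property: a $t \times \gamma$ AIR matrix has the property that every $\gamma$ cyclically-consecutive rows are linearly independent over $\mathbb F_q$ (from \cite{VaR2}), which is exactly the decodability condition for the reduced problem on each class. (4) Stack the $g$ copies to get a full encoding matrix which is block-diagonal (up to the partition permutation), of total length $g\gamma = b(D+1)+a$, giving rate $\frac{b(D+1)+a}{b} = D+1+\frac{a}{b}$; conclude that each receiver, wanting the $b$ symbols $x_{k,1},\ldots,x_{k,b}$, recovers each from the $\gamma$ broadcast symbols of the corresponding block, hence needs to store at most $\gamma$ symbols.

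The main obstacle I expect is step (2): carefully verifying that restricting the symmetric neighboring-interference structure to a residue class mod $g$ again yields a neighboring-interference instance with the right parameters, and in particular that the "interference width" works out to exactly what makes a $\gamma$-row-independence condition on the $t\times\gamma$ AIR matrix sufficient. This requires tracking how the $U$-before/$D$-after interference windows of each $R_k$ interleave across the $b$ generations and then project onto indices $\equiv r \pmod g$; the bookkeeping with the extra slack $c$ (the amount by which $\gcd$ exceeds $b(U+1)$) is the delicate part, and one must confirm the AIR construction tolerates it — essentially reusing the feasibility argument of \cite{VaR5} but now at the reduced scale. The rate computation, the integrality checks, and the linear-independence invocation are then routine given the groundwork.
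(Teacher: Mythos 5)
Your proposal follows essentially the same route as the paper's proof: linearly order the $Kb$ symbols, partition them into the $\tau=b(U+1)+c$ residue classes modulo $\tau$ (the paper's sets $\mathcal{A}_i$), observe that within each class a receiver's interference is a contiguous block of $\gamma-1$ symbols following its wanted symbol with everything else known, and then apply a $t\times\gamma$ AIR matrix per class so that the $\gamma$ adjacent-row independence property yields $\gamma$ solvable equations in $\gamma$ unknowns. The step you flag as delicate (verifying the restricted side-information structure) is exactly what the paper establishes in its displayed partition of $\mathcal{A}_i$ into known/wanted/interference segments, so your plan is consistent with and matches the published argument.
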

\begin{proof}
Consider a $b$-dimensional vector linear index coding. In a $b$-dimensional vector linear index code, $x_k \in \mathbb{F}^{b}_q$ for $k \in [0:K-1]$. In a $b$-dimensional vector linear index coding, receiver $R_k$ wants the $b$ message symbols $x_{k,1}~x_{k,2}\ldots x_{k,b}$. Let 
\begin{align*}
\mathbf{x}=[\underbrace{x_{0,1}x_{0,2}\ldots x_{0,b}}_{x_0}~\underbrace{x_{1,1}x_{1,2}\ldots x_{1,b}}_{x_1}\ldots \underbrace{x_{K-1,1}\ldots x_{K-1,b}}_{x_{K-1}}].
\end{align*}

In this index coding problem, we have 
\begin{align*}
\text{gcd}(bK,b(D+1)+a)=b(U+1)+c,
\end{align*}
for some $c \in Z_{\geq 0}$. Hence, $b(U+1)+a$ divides both $Kb$ and $b(D+1)+a$. Let 
\begin{align*}
\tau=b(U+1)+c.
\end{align*}

Let 
\begin{align*}
y_{k}=x_{\left \lfloor \frac{k}{b} \right \rfloor,k~\text{mod}~b+1}
\end{align*}
for $k \in [0:bK-1]$. We have 
\begin{align*}
&[\underbrace{y_0 y_1\ldots y_{b-1}}_{x_0}~\underbrace{y_{b}y_{b+1}\ldots y_{2b-1}}_{x_1}\ldots \underbrace{y_{(K-1)b}\ldots y_{Kb-1}}_{x_{K-1}}]\\&=[\underbrace{x_{0,1}x_{0,2}\ldots x_{0,b}}_{x_0}~\underbrace{x_{1,1}x_{1,2}\ldots x_{1,b}}_{x_1}\ldots \underbrace{x_{K-1,1}\ldots x_{K-1,b}}_{x_{K-1}}].
\end{align*}

Let $\mathcal{A}_i$ be the subset of $t$ messages as given below.
\begin{align*}
\mathcal{A}_i=\{y_i,y_{i+\tau},y_{i+2\tau},\ldots,y_{i+(t-1)\tau}\},
\end{align*}
for $i \in [0:\tau-1]$.

We have 
\begin{align*}
\mathcal{A}_0 \cup \mathcal{A}_1 \cup \ldots \cup \mathcal{A}_{\tau-1}=\{y_0,y_1,\ldots,y_{bK-1}\}.
\end{align*}

That is, the sets $\mathcal{A}_i$ for $i \in [0:\tau-1]$ partition the $bK$ messages into $\tau$ disjoint sets. For every $s \in [\gamma:t-1]$, we have $s\tau \geq b(D+1)+a$. That is, for every $j \in [0:t-1]$ and $s \in [\gamma:t-1]$, the receiver wanting the message symbol $y_{i+j\tau}$ knows $y_{i+j\tau+s\tau}$. 

Let receiver $R_k$ wants the message $y_{i+j\tau}$ in $\mathcal{A}_i$. The interfering and known messages for $R_{k}$ in $\mathcal{A}_i$ is shown in \eqref{novcmdsv3}.
\begin{align}
\label{novcmdsv3}
\nonumber
\mathcal{A}_i=\{&\underbrace{y_i,y_{i+\tau},\ldots,y_{i+(j-1)\tau}}_{\text{Known~messages~to}~R_{k}},\\&
\nonumber
\underbrace{y_{i+j\tau}}_{\text{wanted message}},\underbrace{y_{i+(j+1)\tau}\ldots,y_{i+(j+\gamma-1)\tau}}_{\text{Interfernce~to}~R_{k}},\\&\underbrace{y_{i+(j+\gamma)(U+1)},y_{i+(j+\gamma+1)(U+1)},\ldots,y_{i+(t-1)\tau}}_{\text{Known~messages~to}~R_{k}}\},
\end{align}
for $j \in [0:t-1]$. 
%%Hence, $\mathcal{A}_i$ can be seen as a SUICP with $t$ messages and $\gamma$ one-sided side-information. 

Let $\mathbf{L}$ be a $t \times \gamma$ AIR matrix. In $\mathbf{L}$, any set of $\gamma$ adjacent rows are linearly independent. Consider the $b(D+1)+a$ code symbols $c_0,c_1~c_2,\ldots,c_{b(D+1)+a-1}$ given in \eqref{novcmdsv1}.
\begin{align}
\label{novcmdsv1}
\nonumber
&[c_i~c_{i+\tau}~\ldots~c_{i+(\gamma-1)\tau}]=\\&[y_i~y_{i+\tau}~y_{i+2\tau}~\ldots~y_{i+(t-1)\tau}]\mathbf{L}_{t \times \gamma}
\end{align}
for $i \in [0:\tau-1]$.

We prove that receiver $R_k$ decodes its wanted messages $x_{k,j}$ for every $k \in [0:K-1]$ and $j \in [1:b]$ by using $\gamma$ index code symbols out of available $b(D+1)+a$ index code symbols given in \eqref{novcmdsv1}. Let $(kb+j)~\text{mod}~\tau=g_j$ and $\left \lfloor \frac{kb+j}{\tau}\right \rfloor=h_j$ for every $j \in [1:b]$. We have, $kb+j=g_j+h_j\tau$ and $y_{g_j+h_j\tau} \in \mathcal{A}_{g_j}$. Let $L_0,L_1,L_2,\ldots,L_{t-1}$ be the $t$ rows of $\mathbf{L}_{t \times \gamma}$. From \eqref{novcmdsv1}, we have
\begin{align} 
\label{novcmdsv2}
\nonumber
&[c_{g_j}~c_{g_j+\tau}~\ldots~c_{g_j+(\gamma-1)\tau}]^{\mathsf{T}}\\&
\nonumber
=\underbrace{y_{g_j}L_0+y_{g_j+\tau}L_1+\ldots,y_{g_j+(b-1)\tau}L_{b-1}}_{\text{Known~messages~to}~R_{k}}\\&
\nonumber
+\underbrace{y_{g_j+b\tau}L_b}_{\text{wanted~message~to}~R_k}
\\&
\nonumber
+\underbrace{y_{g_j+(b+1)\tau}L_{b+1}+\ldots+y_{g_j+(b+\gamma-1)\tau}L_{b+\gamma-1}}_{\text{Interfernce~to}~R_{k}}\\&
+\underbrace{y_{g_j+(b+\gamma)\tau}L_{b+\gamma}+\ldots+y_{g_j+(t-1)\tau}L_{t-1}}_{\text{Known~messages~to}~R_{k}}.
\end{align}

In $\mathcal{A}_a$, receiver $R_k$ knows $t-\gamma$ messages as shown in \eqref{novcmdsv3} and \eqref{novcmdsv2}. Hence, after subtracting the known information, \eqref{novcmdsv2} is a set of $\gamma$ equations in $\gamma$ unknowns. The $\gamma$ equations in \eqref{novcmdsv2} are independent follows from the fact that any $\gamma$ adjacent rows in $\mathbf{L}_{t \times \gamma}$ are linearly independent. Hence, $R_k$ can decode its wanted message $y_{g_j+b\tau}=x_{k,j}~\forall j$. The rate achieved by this method is $\frac{b(D+1)+a}{b}=D+1+\frac{a}{b}$.
\end{proof}
\begin{note}
In \cite{TRCR}, it is shown that the message probability of error in decoding a message at a particular receiver decreases with a decrease in the number of transmissions used to decode the message among the total of broadcast transmissions. The encoding and decoding method given in Theorem \ref{novcccthmv1} indicates that every receiver uses atmost $\frac{b(D+1)+a}{b(U+1)+c}=\gamma$ broadcast symbols to decode its wanted message symbol.
\end{note}
\begin{note}
Another application of the construction method given in Theorem \ref{novcccthmv1} is related to Instantly Decodable Network Coding (IDNC). IDNC deals with code designs when the receivers have no buffer and need to decode the wanted messages instantly without having stored previous transmissions. A recent survey article on IDNC with application to Device-to-Device (D2D) communications is \cite{DSAA}. These results are valid for index coding since it is a special case of network coding. In Theorem \ref{novcccthmv1}, if $\gamma=1$, then every receiver uses atmost one broadcast symbol to decode a message symbol and hence the code is instantly decodable. 
\end{note}

\begin{table*}[t]
\centering
\begin{tabular}{|c|c|c|c|c|c|c|c|c|}
\hline
$K$ &$D$&$U$&$a_{\text{min}}$&$b_{\text{min}}$&$D+1$&$D+1+\frac{a_{\text{min}}}{b_{\text{min}}}$&AIR matrix size& AIR matrix size \\
~ &~&~&~&~&(lowerbound&(upperbound&required in \cite{VaR5}& required in this paper \\
~ &~&~&~&~&on $\beta$)&on $\beta$)&~& by using partition \\
\hline
71 & 1 & 1&1&35&2&2.0285&$2485 \times 71$& {\color{blue}$35 \times 1^{**}$} \\
\hline
71 & 2 & 1,2&2&23&3&3.0869&$1633 \times 71$& {\color{blue}$23 \times 1^{**}$} \\
\hline
71 & 3 & 1&2&35&4&4.0571&$2485 \times 142$& $35 \times 2$ \\
\hline
71 & 3 & 2,3&5&17&4&4.1764&$1207 \times 71$& {\color{blue}$17 \times 1^{**}$} \\
\hline
71 & 4 & 1,2,3,4&1&14&5&5.0714&$994 \times 71$& {\color{blue}$14 \times 1^{**}$} \\
\hline
71 & 5 &1&3&35&6&6.0857&$2485 \times 71$& {\color{blue}$35 \times 1^{**}$} \\
\hline
71 & 5 & 2&4&23&6&6.1739&$1633 \times 142$& $23 \times 2$ \\
\hline
71 & 5 & 3,4,5&5&11&6&6.4545&$781 \times 71$& {\color{blue}$11 \times 1^{**}$} \\
\hline
71 & 6 &1,2,$\ldots$,6&1&10&7&7.1000&$710 \times 71$& {\color{blue}$10 \times 1^{**}$} \\
\hline
71 & 7 &1&4&35&8&8.1142&$2485 \times 284$& $35 \times 4$ \\
\hline
71 & 7 & 2,3&6&17&8&8.3529&$1207 \times142$&  $17 \times 2$\\
\hline
71 & 7 & 4,5,6,7&7&8&8&8.8750&$568 \times 71$& {\color{blue}$8 \times 1^{**}$} \\
\hline
71 & 8 &1&5&31&9&9.1612&$2201 \times 284$& $31 \times 4$  \\
\hline
71 & 8 & 2&6&23&9&9.2608&$1633 \times 213$& {\color{blue}$23 \times 1^{**}$} \\
\hline
71 & 8 & 3&7&15&9&9.4666&$1065 \times 142$& $15 \times 2$ \\
\hline
71 & 8 & 4,5,6,7,8&1&7&9&9.1428&$497 \times 71$& {\color{blue}$7 \times 1^{**}$} \\
\hline
71 & 9 &1,2,$\ldots$,9&1&7&10&10.1428&$497 \times 71$& {\color{blue}$7 \times 1^{**}$} \\
\hline
71 & 10 & 1&3&32&11&11.0937&$2272 \times 355$&  $32 \times 5$\\
\hline
71 & 10 & 2&4&19&11&11.2105&$1349 \times 213$& $19 \times 3$ \\
\hline
71 & 10 & 3,4,$\ldots$,10&5&6&11&11.8333&$426 \times 71$& {\color{blue}$6 \times 1^{**}$} \\
\hline
\end{tabular}
\vspace{10pt}
\caption{Partition Encoding for SUICP(SNI) with $K=71$ and $U \leq D \leq 10$ (**Instantly decodable index codes).}
\label{table1}
\end{table*}
\begin{example}
\label{novcccex5}
Consider a SUICP(SNI) with $K=13,D=4,U=1$. For this SUICP(SNI), we have $(a=1,b=5) \in \mathcal{\mathbf{S}}$ and in \cite{VaR5}, we showed that the rate $D+1+\frac{a}{b}=5.2$ can be achieved by using AIR matrix of size $65 \times 26$ and 5-dimensional vector linear index coding. However, by using the partition method given in this paper, in this example, we show that the AIR matrix of size $5 \times 2$ is adequate for achieving a rate of $D+1+\frac{a}{b}=5.2$.

%%For this SUICP(SNI), we have 
%%\begin{align*}
%%&t=\frac{Kb}{b(U+1)+c}=5,\\&
%%\gamma=\frac{b(D+1)+a}{b(U+1)+c}=2, \\&
%%\tau=b(U+1)+c=13.
%%\end{align*}
%%and the thirteen partitioned sets $\mathcal{A}_i$ for $i \in [1:13]$ are 
%%\begin{align*}
%%&\{x_{0,1},x_{2,4},x_{5,2},x_{8,0},x_{10,3}\}, \{x_{0,2},x_{2,5},x_{5,3},x_{8,1},x_{10,4}\}, \\&\{x_{0,3},x_{3,1},x_{5,4},x_{8,2},x_{10,5}\}, \{x_{0,4},x_{3,2},x_{5,5},x_{8,3},x_{11,1}\},\\&
%%\{x_{0,5},x_{3,3},x_{6,1},x_{8,4},x_{11,2}\}, 
%%\{x_{1,1},x_{3,4},x_{6,2},x_{8,5},x_{11,3}\},\\&
%%\{x_{1,2},x_{3,5},x_{6,3},x_{9,1},x_{11,4}\},
%%\{x_{1,3},x_{4,1},x_{6,4},x_{9,2},x_{11,5}\},\\&
%%\{x_{1,4},x_{4,2},x_{6,5},x_{9,3},x_{12,1}\},
%%\{x_{1,5},x_{4,3},x_{7,1},x_{9,4},x_{12,2}\},\\&
%%\{x_{2,1},x_{4,4},x_{7,2},x_{9,5},x_{12,3}\},
%%\{x_{2,2},x_{4,5},x_{7,3},x_{10,1},x_{12,4}\},\\&
%%\{x_{2,3},x_{5,1},x_{7,4},x_{10,2},x_{12,5}\}.
%%\end{align*}

For this SUICP(SNI), we have 
\begin{align*}
&t=\frac{Kb}{b(U+1)+c}=5,\\&
\gamma=\frac{b(D+1)+a}{b(U+1)+c}=2, \\&
\tau=b(U+1)+c=13.
\end{align*}
and
\begin{align*}
&\mathcal{A}_1=\{x_{0,1},x_{2,4},x_{5,2},x_{8,0},x_{10,3}\}, \\&
\mathcal{A}_2=\{x_{0,2},x_{2,5},x_{5,3},x_{8,1},x_{10,4}\}, \\&
\mathcal{A}_3=\{x_{0,3},x_{3,1},x_{5,4},x_{8,2},x_{10,5}\},\\&
\mathcal{A}_4=\{x_{0,4},x_{3,2},x_{5,5},x_{8,3},x_{11,1}\},\\&
\mathcal{A}_5=\{x_{0,5},x_{3,3},x_{6,1},x_{8,4},x_{11,2}\},\\&
\mathcal{A}_6=\{x_{1,1},x_{3,4},x_{6,2},x_{8,5},x_{11,3}\},\\&
\mathcal{A}_7=\{x_{1,2},x_{3,5},x_{6,3},x_{9,1},x_{11,4}\},\\&
\mathcal{A}_8~=\{x_{1,3},x_{4,1},x_{6,4},x_{9,2},x_{11,5}\},\\&
\mathcal{A}_9~=\{x_{1,4},x_{4,2},x_{6,5},x_{9,3},x_{12,1}\},\\&
\mathcal{A}_{10}=\{x_{1,5},x_{4,3},x_{7,1},x_{9,4},x_{12,2}\},\\&
\mathcal{A}_{11}=\{x_{2,1},x_{4,4},x_{7,2},x_{9,5},x_{12,3}\},\\&
\mathcal{A}_{12}=\{x_{2,2},x_{4,5},x_{7,3},x_{10,1},x_{12,4}\},\\&
\mathcal{A}_{13}=\{x_{2,3},x_{5,1},x_{7,4},x_{10,2},x_{12,5}\}.
\end{align*}

From the partition, in $\mathcal{A}_i$ for every $i \in [1:13]$, any receiver wanting a message in $\mathcal{A}_i$ knows three other consecutive messages in $\mathcal{A}_i$. To understand this, consider $i=7$. $\mathcal{A}_7$ comprises of one wanted message for each of the receivers $R_1,R_3,R_6,R_9$ and $R_{11}$. In $\mathcal{A}_7$, the side-information (SI) and interference of the respective receivers is given below:-
\begin{align*}
&\{\underbrace{x_{1,2}}_{\text{wanted message to}~R_1},\underbrace{x_{3,5}}_{\text{Interference to}~R_1},\underbrace{x_{6,3},x_{9,1},x_{11,4}}_{\text{SI to}~R_1}\},\\&
\{\underbrace{x_{1,2}}_{\text{SI to}~R_3},\underbrace{x_{3,5}}_{\text{wanted message to}~R_3},\underbrace{x_{6,3}}_{\text{interference to}~R_3},\underbrace{x_{9,1},x_{11,4}}_{\text{SI to}~R_3}\},\\&
\{\underbrace{x_{1,2},x_{3,5}}_{\text{SI to}~R_6},\underbrace{x_{6,3}}_{\text{wanted message to}~R_6},\underbrace{x_{9,1},}_{\text{interference to}~R_6},\underbrace{x_{11,4}}_{\text{SI to}~R_6}\},\\&
\{\underbrace{x_{1,2},x_{3,5},x_{6,3}}_{\text{SI to}~R_9},\underbrace{x_{9,1},}_{\text{wanted message to}~R_9},\underbrace{x_{11,4}}_{\text{interference to}~R_9}\},\\&
\{\underbrace{x_{1,2}}_{\text{interference to}~R_{11}},\underbrace{x_{3,5},x_{6,3},x_{9,1},}_{\text{SI to}~R_{11}},\underbrace{x_{11,4}}_{\text{wanted message to}~R_{11}}\}.
\end{align*}

In an AIR matrix of size $5 \times 2$, every two adjacent rows are linearly independent. Hence, AIR matrix of size $5 \times 2$ can be used as an encoding matrix for $\mathcal{A}_i$ for $i \in [1:13]$. AIR matrix of size $5 \times 2$ is given below.
\setlength\extrarowheight{-8pt}
\arraycolsep=0.9pt
$$\mathbf{L}_{5 \times 2}=\left[
\begin{array}{cc}
1 & 0  \\
0 & 1  \\
1 & 0  \\
0 & 1  \\
1 & 1  \\
 \end{array}
\right]$$

The $26$ broadcast symbols for this SUICP(SNI) is obtained by multiplying each of the $13$ partitions above with AIR matrix of size $5 \times 2$.
The $26$ broadcast symbols for this SUICP(SNI) is obtained as given below:
\begin{align*}
\nonumber
&[c_0~~c_{13}]=[x_{0,1}~x_{2,4}~x_{5,2}~x_{8,0}~x_{10,3}]\mathbf{L}_{5 \times 2}\\&
\nonumber
[c_1~~c_{14}]=[x_{0,2}~x_{2,5}~x_{5,3}~x_{8,1}~x_{10,4}]\mathbf{L}_{5 \times 2}\\&
\nonumber
[c_2~~c_{15}]=[x_{0,3}~x_{3,1}~x_{5,4}~x_{8,2}~x_{10,5}]\mathbf{L}_{5 \times 2} \\&
\nonumber
[c_3~~c_{16}]=[x_{0,4}~x_{3,2}~x_{5,5}~x_{8,3}~x_{11,1}]\mathbf{L}_{5 \times 2}\\&
\nonumber
[c_4~~c_{17}]=[x_{0,5}~x_{3,3}~x_{6,1}~x_{8,4}~x_{11,2}]\mathbf{L}_{5 \times 2}\\&
\nonumber
[c_5~~c_{18}]=[x_{1,1}~x_{3,4}~x_{6,2}~x_{8,5}~x_{11,3}]\mathbf{L}_{5 \times 2}\\&
\nonumber
[c_6~~c_{19}]=[x_{1,2}~x_{3,5}~x_{6,3}~x_{9,1}~x_{11,4}]\mathbf{L}_{5 \times 2}\\&
\nonumber
[c_7~~c_{20}]=[x_{1,3}~x_{4,1}~x_{6,4}~x_{9,2}~x_{11,5}]\mathbf{L}_{5 \times 2}\\&
\nonumber
[c_8~~c_{21}]=[x_{1,4}~x_{4,2}~x_{6,5}~x_{9,3}~x_{12,1}]\mathbf{L}_{5 \times 2}\\&
\nonumber
[c_9~~c_{22}]=[x_{1,5}~x_{4,3}~x_{7,1}~x_{9,4}~x_{12,2}]\mathbf{L}_{5 \times 2}\\&
\nonumber
[c_{10}~c_{23}]=[x_{2,1}~x_{4,4}~x_{7,2}~x_{9,5}~x_{12,3}]\mathbf{L}_{5 \times 2}\\&
\nonumber
[c_{11}~c_{24}]=[x_{2,2}~x_{4,5}~x_{7,3}~x_{10,1}~x_{12,4}]\mathbf{L}_{5 \times 2}\\&
\nonumber
[c_{12}~c_{25}]=[x_{2,3}~x_{5,1}~x_{7,4}~x_{10,2}~x_{12,5}]\mathbf{L}_{5 \times 2}.
\end{align*}

Let $k=3$. $R_3$ wants to decode $x_{3,1},x_{3,2},x_{3,3},x_{3,4}$ and $x_{3,5}$. We have $x_{3,j} \in \mathcal{A}_{2+j}$ for every $j \in [1:5]$. $R_3$ decodes $x_{3,j}$ from $[c_{1+j}~~c_{14+j}]$. In $\mathcal{A}_{2+j}$, $R_3$ knows three messages for every $j \in [1:5]$. Hence, after substituting the known messages, $R_3$ sees $[c_{1+j}~~c_{14+j}]$ as two equations with two unknowns and solves the wanted message $x_{3,j}$ for every $j \in [1:5]$. 
\end{example}

\begin{note}
In Example \ref{novcccex5}, the size of the AIR encoding matrix used is $5 \times 2$. Hence, the encoding and decoding method given in Theorem \ref{novcccthmv1} guarantees that every receiver uses atmost $2$ broadcast symbols to decode its wanted message symbol.
\end{note}

In \cite{VaR5}, we give an algorithm to compute the value of $a$ and $b$ in $\mathbf{S}$ such that $\frac{a}{b}$ is minimum. We refer these values of $a$ and $b$ as $a_{\text{min}}$ and $b_{\text{min}}$. We obtain an upperbound on the broadcast rate given by $R_{airm}=D+1+\frac{a_{\text{min}}}{b_{\text{min}}}$. We show that $R_{airm}$ coincides with the existing results on the exact value of the broadcast rate in the respective settings. For the given $(a,b) \in \mathbf{S}$, in Theorem \ref{novcccthmv1}, we constructed index codes with rate $R_{airm}=D+1+\frac{a}{b}$. Hence, the rate of index code construction given in Theorem \ref{novcccthmv1} coincides with existing results of the exact value of the broadcast rate in the respective settings.

\begin{example}
\label{ex2}
For SUICP(SNI) with $K=71$, $U \leq D \leq 10$, the upperbound on $\beta$ and lowerbound on $\beta$ are shown in Table \ref{table1}. For $U=D=1$, $R_{airm}=2.0285$ coincide with the reciprocal of capacity given by Blasiak \textit{et al.} in \eqref{cap5}. Note that the 9th column of Table \ref{table1} gives the maximum number of broadcast symbols used by any receiver to decode its wanted message. The 9th column of Table \ref{table1} also indicates instantly decodable codes. 
\end{example}

\section{Scalar linear index codes for SUICP(SNI)}
\label{sec6}
In this section, we give scalar linear index codes for SUICP(SNI) with arbitrary $K,D$ and $U$. In \cite{VaR6}, we proved Lemma \ref{lemma3} by proving the capacity of SUICP(SNI) with arbitrary $K,D$ and $U=\text{gcd}(K,D+1)-1$. This lemma is useful to derive Theorem \ref{thm10} of this section.
\begin{lemma}
\label{lemma3}
In the AIR matrix of size $m \times n$, every row $L_k$ is not in the span of $n-1$ rows above and $\text{gcd}(m,n)-1$ rows below $L_k$ for $k \in [0:m-1]$.
\end{lemma}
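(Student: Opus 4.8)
The plan is to reduce the statement to a span-containment fact and then establish that by induction on the recursive (Euclidean) block structure of the AIR matrix. First I would use the defining property of AIR matrices from \cite{VaR2}: any $n$ consecutive rows of an $m\times n$ AIR matrix $\mathbf{L}$ are linearly independent. Fix $k$ with $k\ge n-1$ and $k+d-1\le m-1$, where $d=\gcd(m,n)$ (the few remaining values of $k$ are treated as boundary cases, noted below). Then $L_{k-n+1},\dots,L_{k-1}$ are independent, $H:=\langle L_{k-n+1},\dots,L_{k-1}\rangle$ is a hyperplane of $\mathbb{F}_q^n$, $L_k\notin H$, and $H+\langle L_k\rangle=\mathbb{F}_q^n$. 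Hence the conclusion for this $k$ — that $L_k$ is not in the span of $\{L_{k-n+1},\dots,L_{k-1}\}\cup\{L_{k+1},\dots,L_{k+d-1}\}$ — is \emph{equivalent} to $L_{k+i}\in H$ for all $i\in[1:d-1]$: if every such row lies in $H$ the combined span is still $H\not\ni L_k$, while if some $L_{k+i}\notin H$ the combined span is all of $\mathbb{F}_q^n\ni L_k$. So it suffices to show that each of the $d-1$ rows $L_{k+1},\dots,L_{k+d-1}$ lies in the span of the $n-1$ rows immediately above $L_k$.

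To prove this containment I would induct on the pair $(m,n)$ along the Euclidean chain of Fig.~\ref{fig1}. Operationally, $\mathbf{L}_{m\times n}$ consists of $\lfloor m/n\rfloor$ stacked copies of $\mathbf{I}_n$ sitting on top of the transpose of the smaller AIR matrix $\mathbf{L}_{n\times r}$, where $r=m\bmod n$ and $\gcd(n,r)=d$; the recursion terminates at the corner block $\mathbf{S}$, a plain array of copies of $\mathbf{I}_d$. In the base case (such an array), $d$ is the number of columns, the rows run through $\mathbf{e}_1,\dots,\mathbf{e}_d$ cyclically, the $d-1$ rows above $L_k$ span the coordinate hyperplane omitting precisely the basis vector equal to $L_k$, and the $d-1$ rows below $L_k$ are exactly the remaining basis vectors, all inside that hyperplane. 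For the inductive step, whenever the window $\{L_{k-n+1},\dots,L_{k+d-1}\}$ stays inside the $\mathbf{I}_n$ blocks the same computation (with $\mathbf{e}_1,\dots,\mathbf{e}_n$) applies verbatim; for $k$ in the bottom block, the required containment turns into a span relation among columns of $\mathbf{L}_{n\times r}$, which I would obtain from the inductive hypothesis applied to $\mathbf{L}_{n\times r}$ read columnwise.

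The hard part will be the ``straddle'' indices $k$, where the window starts in an $\mathbf{I}_n$ block but crosses into the transposed block $\mathbf{L}_{n\times r}^{\mathsf T}$. There the span of the $n-1$ rows above $L_k$ is still a coordinate hyperplane $\{x_{j_0}=0\}$, and the membership $L_{k+i}\in H$ unwinds to the vanishing of a prescribed triangular array of entries in the last $d-1$ rows of $\mathbf{L}_{n\times r}$. I would therefore strengthen the induction to carry three coupled statements — the lemma's row form, its columnwise analogue, and this lower-left zero pattern of AIR matrices — all of which propagate through the peel-and-transpose step (the zeros, in particular, coming from the positions of the $\mathbf{I}$-sub-blocks at the bottom of $\mathbf{L}_{n\times r}$). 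Keeping the three assertions synchronized across the alternating tall/wide steps of the recursion, together with the handful of short-window and genuine boundary indices, is the bulk of the work, although each individual verification is a single line of linear algebra. As an aside, the statement is equivalent, up to reversing the row order and the usual cyclic wrap-around adjustment, to the decodability condition of the optimal scalar linear index code for the SUICP(SNI) with $U=\gcd(K,D+1)-1$, so it can also be read off from the capacity result of \cite{VaR6}.
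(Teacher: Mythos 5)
First, note that the paper itself does not prove Lemma~\ref{lemma3}: it is imported from \cite{VaR6}, where it is established in the course of determining the capacity of SUICP(SNI) with $U=\gcd(K,D+1)-1$. So there is no in-paper proof to compare against, and your closing aside --- that the lemma ``can also be read off from the capacity result of \cite{VaR6}'' --- is essentially the citation the authors make, except that the logical direction is the reverse of what you suggest: the span property of the AIR matrix is what makes the decoding (achievability) argument work, so the lemma is an ingredient of the capacity proof rather than a consequence of the capacity statement. You cannot soundly derive the lemma from the capacity value alone, since that value could a priori be achieved by a different code.

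On the merits of your argument: the opening reduction is correct and is the right first move. For $k\ge n-1$ the $n-1$ rows above $L_k$ span a hyperplane $H$ with $L_k\notin H$ (by the $n$-adjacent-rows independence), and the dichotomy --- either all of $L_{k+1},\dots,L_{k+d-1}$ lie in $H$, so the combined span is $H\not\ni L_k$, or some $L_{k+i}\notin H$, so the combined span is all of $\mathbb{F}_q^n\ni L_k$ --- shows the lemma is equivalent to the containment $L_{k+1},\dots,L_{k+d-1}\in H$. (You should also record explicitly that the top boundary cases $k<n-1$ are immediate because all rows in the window are then standard basis vectors from the top $\mathbf{I}_n$ block, and that at the bottom a deficit of rows below only shrinks the span; the hyperplane equivalence as you state it fails when fewer than $n-1$ rows sit above $L_k$.) However, the containment itself --- the entire content of the lemma --- is not actually proved. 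You identify the straddle indices as the hard part and propose to carry ``three coupled statements'' through the Euclidean recursion, but you never state those statements precisely, never exhibit the lower-left zero pattern you need, and never check that they propagate through a single peel-and-transpose step. Since the $\gcd(m,n)-1$ bound is exactly what distinguishes this lemma from the ordinary adjacent-independence property, and since that bound emerges only from the terminal block $\mathbf{S}$ at the end of the chain \eqref{chain}, the unexecuted induction is a genuine gap rather than a routine verification: as written, the proposal is a plausible plan, not a proof.
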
	
The Theorem \ref{thm10} and Lemma \ref{lemma4},  we give a scalar linear index codes for SUICP(SNI). 
\begin{theorem}
\label{thm10}
Consider a SUICP(SNI) with arbitrary $K,D$ and $U$. Let $a$ and $b$ be the non negative integers satisfying the relation 
\begin{align}
\label{eq10}
gcd(K+a,D+1+a+b) \geq U+1+a.
\end{align}
Then, for this SUICP(SNI), an AIR matrix of size $(K+a) \times (D+1+a+b)$ can be used as an encoding matrix to generate an index code with length $D+1+a+b$.
\end{theorem}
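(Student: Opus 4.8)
The plan is to reduce the claim to the vector‑linear construction of \cite{VaR5} (equivalently, the scalar case $b=1$ of Theorem \ref{novcccthmv1}) applied to a \emph{larger} SUICP(SNI). The first observation is that \eqref{eq10} is exactly the membership condition $(b,1)\in\mathbf{S}'$, where $\mathbf{S}'$ is the set \eqref{set11} associated with the SUICP(SNI) whose parameters are $K'=K+a$, $D'=D+a$, $U'=U+a$: indeed $\gcd(1\cdot K',\,1\cdot(D'+1)+b)=\gcd(K+a,\,D+1+a+b)\ge U+1+a=1\cdot(U'+1)$. Hence, by the construction of \cite{VaR5}, the SUICP(SNI) with parameters $(K',D',U')$ admits a scalar ($1$-dimensional) linear index code of length $D'+1+b=D+1+a+b$ whose encoding matrix is an AIR matrix of size $K'\times(D'+1+b)=(K+a)\times(D+1+a+b)$.

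The second step is to exhibit the original $K$-message problem as a specialization of this larger one. Append $a$ dummy messages $x_K=x_{K+1}=\cdots=x_{K+a-1}=0$, placed in $a$ consecutive positions on the cycle (right after $x_{K-1}$), so that the $K+a$ messages form the larger SUICP(SNI). The encoding $[x_0\,x_1\,\ldots\,x_{K+a-1}]\mathbf{L}$ then collapses to $[x_0\,x_1\,\ldots\,x_{K-1}\,0\,\ldots\,0]\mathbf{L}$, a linear function of the $K$ genuine messages only, of length $D+1+a+b$. It remains to check that every receiver $R_k$, $k\in[0:K-1]$, of the original problem can still decode $x_k$. For this it is enough to show that the interference set of $R_k$ in the original problem is contained in the interference set of $R_k$ in the larger problem: then the side‑information of $R_k$ in the original problem, together with the known zero values of the dummies, is a superset of its side‑information in the larger problem, and the code of \cite{VaR5}, which already lets $R_k$ decode $x_k$ in the larger problem, a fortiori lets it decode in the original one.

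Finally I would verify this containment. For a receiver $R_k$ that is far from the block of dummy positions, the backward window of length $U+a$ and the forward window of length $D+a$ around $x_k$ contain no dummies, hence contain all of $\{x_{k-U},\ldots,x_{k-1}\}$ and $\{x_{k+1},\ldots,x_{k+D}\}$ respectively (plus $a$ extra genuine messages that lie in $R_k$'s side‑information anyway). For a receiver $R_k$ near the dummy block the relevant window wraps across the $a$ dummy positions; because the window has been enlarged by exactly $a$ on that side while exactly $a$ dummies are being skipped, the genuine messages it meets are precisely the original wrap‑around interferers, and the dummies it meets are known (zero). This bookkeeping of the wrap‑around windows — equivalently, re‑deriving decodability directly from Lemma \ref{lemma3} and the cyclic structure of the AIR matrix — is the only delicate point and the main obstacle; once it is settled, the index code of length $D+1+a+b$ built from the $(K+a)\times(D+1+a+b)$ AIR matrix works for the original SUICP(SNI). (Implicitly this requires the enlarged problem to be a genuine SUICP(SNI), i.e. $U+D+a<K$; in the degenerate range one simply transmits all messages.)
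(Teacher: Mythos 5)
Your proof is correct and follows essentially the same route as the paper: both pad the problem with $a$ dummy zero messages to form a modified SUICP(SNI) with parameters $(K+a,\,D+a,\,U+a)$ and encode it with the $(K+a)\times(D+1+a+b)$ AIR matrix, the paper justifying decodability via Lemma~\ref{lemma3} while you equivalently invoke the $\mathbf{S}$-membership condition underlying the construction of \cite{VaR5}. If anything, your explicit check that each original receiver's interference is contained in its interference in the enlarged problem (including the wrap-around bookkeeping and the caveat $U+D+a<K$) is more careful than the paper, which asserts this reduction in a single sentence.
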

\begin{proof}
Let $a$ and $b$ be the non negative integers satisfying \eqref{eq10}. Consider an SUICP(SNI) with $K+a$ messages $[\underbrace{x_0~x_1~\ldots~x_{K-1}}_{K~messages}~\underbrace{0~0~\ldots~0}_{a~zeros}]$ and $D+a+b$ interference above and $U+a$ interference before the desired message. We refer this SUICP(SNI) as modified SUICP(SNI). 

From Lemma \ref{lemma3}, in an AIR matrix of size $(K+a) \times (D+1+a+b)$, every row $L_k$ is not in the span of $D+a+b$ rows above and $\text{gcd}(K+a,D+a+b)-1 \geq U+a$ rows below $L_k$ for $k \in [0:K+a-1]$. Hence, AIR matrix of size $(K+a) \times (D+1+a+b)$ can be used as an encoding matrix to the modified SUICP(SNI).

A scalar linear index code of length $D+1+a+b$ generated by 
\begin{align*}
 [c_0~c_1~\ldots~c_{D+a+b}]=[\underbrace{x_0~x_1~\ldots~x_{K-1}}_{K~messages}~\underbrace{0~0~\ldots~0}_{a~zeros}]\mathbf{L},
\end{align*}
where $\mathbf{L}$ be a AIR matrix of size $(K+a) \times (D+1+a+b)$. 

A solution to this modified SUICP(SNI) can also be used as a solution to the SUICP(SNI) with $K,D$ and $U$. This completes the proof.
\end{proof}
\begin{example}
Consider SUICP(SNI) with $K=19,D=13$ and $U=3$. These $K,D$ and $U$ satisfy \eqref{eq10} with $a=1$ and $b=0$. Hence, AIR matrix of size $(K+a) \times (D+1+a+b)=20 \times 15$ can be used as an encoding matrix for this SUICP(SNI). The length of the index code is $15$.
\end{example}

\begin{example}
Consider SUICP(SNI) with $K=71,D=52$ and $U=16$. These $K,D$ and $U$ satisfy \eqref{eq10} with $a=1$ and $b=0$. Hence, AIR matrix of size $(K+a) \times (D+1+a+b)=72 \times 54$ can be used as an encoding matrix for this SUICP(SNI). he length of the index code is $54$.
\end{example}

The following lemma guarantees that the length of index code for SUICP(SNI) with arbitrary $K,D$ and $U$ is less than $D+U+1$.
\begin{lemma}
\label{lemma4}
For an SUICP(SNI) with arbitrary $K,D$ and $U$, AIR matrix of size $K \times (D+U+1)$ can be used as an encoding matrix.
\end{lemma}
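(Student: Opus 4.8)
The plan is to take $\mathbf{L}$ to be the AIR matrix of size $K \times (D+U+1)$ and use it directly as the encoding matrix, so the broadcast is $[c_0~c_1~\ldots~c_{D+U}] = [x_0~x_1~\ldots~x_{K-1}]\mathbf{L}$, an index code of length $D+U+1$. The only thing to check is that every receiver $R_k$ recovers $x_k$. By \eqref{antidote} and \eqref{sideinformation}, $R_k$ knows $x_j$ for every $j$ outside the cyclic block $\{k-U,k-U+1,\ldots,k+D\}$ (all indices mod $K$), so it can subtract the contributions of its side-information symbols from the received vector; what remains is a system of $D+U+1$ linear equations in the $D+U+1$ unknowns $x_{k-U},\ldots,x_{k+D}$ whose coefficient matrix $\mathbf{L}^{(k)}$ is the $(D+U+1)\times(D+U+1)$ submatrix of $\mathbf{L}$ formed by the rows indexed cyclically by $k-U,\ldots,k+D$. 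Hence it suffices to prove that any $D+U+1$ cyclically consecutive rows of the $K\times(D+U+1)$ AIR matrix are linearly independent over $\mathbb{F}_q$; then $\mathbf{L}^{(k)}$ is invertible, $R_k$ solves for all of $x_{k-U},\ldots,x_{k+D}$, and in particular for $x_k$.

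For the windows $\{k-U,\ldots,k+D\}$ that do not wrap around (i.e.\ $U\le k\le K-1-D$) this is immediate: these are $D+U+1$ genuinely adjacent rows of an $m\times n$ AIR matrix with $n=D+U+1$, and any $n$ adjacent rows are linearly independent by the defining property of the AIR matrix construction \cite{VaR2} (equivalently, by Lemma \ref{lemma3}: each such row lies outside the span of the at most $n-1$ rows preceding it inside the block). The substantive part is the wrap-around windows, those consisting of the last few rows of $\mathbf{L}$ together with an initial segment of rows. I would handle these by reading the $n=D+U+1$ rows of such a window in cyclic order: each row has at most $n-1$ predecessors within the window, so it is enough to know that no row of the AIR matrix lies in the span of the $n-1$ rows cyclically above it — a cyclic strengthening of Lemma \ref{lemma3} that follows from the explicit block structure of the AIR matrix in Fig.\ \ref{fig1}, in which the bottom $\text{gcd}(K,D+U+1)$ rows are governed by identity blocks and therefore pin down enough coordinates to kill any candidate dependency that straddles the ``seam.''

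The main obstacle is precisely this wrap-around case: Lemma \ref{lemma3} as stated controls rows ``above'' and ``below'' in the linear (non-cyclic) order, so one must verify that the AIR matrix of Fig.\ \ref{fig1} genuinely retains independence for $D+U+1$ rows that straddle its last and first rows. I expect this to be the only place the detailed structure of the AIR matrix (the decomposition $\lambda_0,\lambda_1,\ldots$ and the submatrix $\mathbf{S}$) is actually used; everything else is bookkeeping about which messages a receiver knows. If the cyclic adjacent-independence of AIR matrices is already recorded in \cite{VaR2} or \cite{VaR6}, the proof collapses to citing it together with the subtraction-of-side-information step above; otherwise a short induction on the continued-fraction depth $l$, peeling identity blocks off the top of $\mathbf{L}$, should establish it.
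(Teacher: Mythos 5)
Your proposal is correct and follows essentially the same route as the paper: broadcast $x\mathbf{L}$ with $\mathbf{L}$ the $K\times(D+U+1)$ AIR matrix, subtract the $K-(D+U+1)$ known symbols at $R_k$, and invert the resulting $(D+U+1)\times(D+U+1)$ system using the adjacent-row independence of the AIR matrix. The wrap-around subtlety you isolate is real, but it is precisely the (cyclic) adjacent-independence property that the paper treats as the \emph{defining} property of AIR matrices from \cite{VaR2} --- adjacency there must be understood cyclically for any of the symmetric neighboring constructions to work --- so the paper's proof resolves your ``main obstacle'' by citation rather than by a new argument.
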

\begin{proof}
From the definition of AIR matrix, in an AIR matrix of size $K \times (D+U+1)$, any $D+U+1$ adjacent rows are linearly independent. Let the scalar linear index code generated by AIR matrix $\mathbf{L}$ be
\begin{align}
\label{code}
 y=x\mathbf{L}=\sum_{k=0}^{K-1}x_kL_k.
\end{align}
From \eqref{code}, receiver $R_k$ wants to decode $x_k$ for every  $k \in [0:K-1]$. After substituting the $K-(D+U+1)$ side-information of $R_k$, \eqref{code} is a system of $D+U+1$ independent equations in $D+U+1$ unknowns. Hence, $R_k$ can decode $x_k$ for every $k \in [0:K-1]$.
\end{proof}

\section{Improved upper-bounds on the broadcast rate of SUICP(SNI)}
\label{sec7}
Let $\mathbf{S}=\{(a,b):\text{gcd}(bK,b(D+1)+a)\geq b(U+1)\}$. In Theorem \ref{novcccthmv1}, we gave reduced complexity index code for SUICP(SNI) with length $D+1+\frac{a}{b}$. In \cite{VaR5}, we gave an algorithm to find out the values of $a$ and $b$ in $\mathbf{S}$ such that $\frac{a}{b}$ is minimum and gave an index code with length $D+1+(\frac{a}{b})_{\text{min}}$. However, for a given $K$, for certain values of $D$ and $U$, we get $D+1+(\frac{a}{b})_{\text{min}}=K$ and this length does not give any advantage when compared with uncoded transmission. In Lemma \ref{lemmanew}, we give the values of $D$ and $U$ for a given $K$ for which $D+1+(\frac{a}{b})_{\text{min}}=K$. In \cite{VaR5}, we proved that $b \in [1:\left \lfloor \frac{K}{U+1} \right \rfloor]$. This fact is used in the proof of Lemma \ref{lemmanew}.

\begin{lemma}
\label{lemmanew}
Consider an SUICP(SNI) with $K,D$ and $U$. Let 
\begin{align}
\label{con1}
\mathcal{D}_l= \left[ \left \lfloor \frac{lK}{l+1} \right \rfloor :\left \lfloor \frac{(l+1)K}{l+2} \right \rfloor-1 \right],
\end{align}
\begin{align}
\label{con2}
\mathcal{U}_l=\left [ \left \lfloor \frac{K}{l+2} \right \rfloor :  \left \lfloor \frac{K}{l+1} \right \rfloor -1\right],
\end{align}
for some non negative integer $l$. Let $D$ and $U$ be such that 
\begin{align}
\label{con11}
D \in \mathcal{D}_l~~\text{and}~~~U \in \mathcal{U}_l
\end{align}
for $l \in \mathbb{Z}^+$. For this SUICP(SNI), we get $D+1+(\frac{a}{b})_{\text{min}}=K$.
\end{lemma}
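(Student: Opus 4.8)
The plan is to prove the equivalent statement $(\frac{a}{b})_{\text{min}}=K-D-1$; since $D+1+(\frac{a}{b})_{\text{min}}$ is exactly the rate produced by Theorem \ref{novcccthmv1} (and by \cite{VaR5}), this is what the lemma asserts. The ``$\leq$'' direction is immediate: taking $a=b(K-D-1)$ gives $b(D+1)+a=bK$, so $\gcd(bK,b(D+1)+a)=bK\geq b(U+1)$ because $U<K$; hence $(b(K-D-1),b)\in\mathbf{S}$ for every $b\geq 1$ and $(\frac{a}{b})_{\text{min}}\leq K-D-1$. The content of the lemma is therefore the reverse inequality under \eqref{con11}, i.e.\ that no $(a,b)\in\mathbf{S}$ can have $a/b<K-D-1$.

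For that I would argue by contradiction. Suppose $(a,b)\in\mathbf{S}$ with $b\in[1:\lfloor K/(U+1)\rfloor]$ (the admissible range of $b$ established in \cite{VaR5}) and $a<b(K-D-1)$. Write $n=b(D+1)+a$ and $g=\gcd(bK,n)$, so $0<n<bK$ and $g\geq b(U+1)$. The key observation is that $g$ divides both $n$ and $bK$, so $n$ and $bK$ are distinct multiples of $g$ with $n<bK$; hence $n\leq bK-g$, which rearranges to $g\leq bK-n=b(K-D-1)-a\leq b(K-D-1)$. In other words, strictly beating uncoded transmission forces the existence of a divisor $g$ of $bK$ in the interval $[b(U+1),\,b(K-D-1)]$.

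The last step is to show \eqref{con11} makes that interval contain no divisor of $bK$. From $D\in\mathcal{D}_l$ I use only the left endpoint: $D\geq\lfloor lK/(l+1)\rfloor$ gives $D+1>lK/(l+1)$ (because $\lfloor x\rfloor+1>x$), hence $K-D-1<K/(l+1)$. Likewise $U\in\mathcal{U}_l$ gives $U+1>K/(l+2)$. Therefore any divisor $g$ of $bK$ with $b(U+1)\leq g\leq b(K-D-1)$ would satisfy $\frac{bK}{l+2}<g<\frac{bK}{l+1}$, i.e.\ $l+1<\frac{bK}{g}<l+2$; but $g\mid bK$ forces $\frac{bK}{g}$ to be a positive integer, and there is no integer strictly between the consecutive integers $l+1$ and $l+2$ --- a contradiction. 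This gives $(\frac{a}{b})_{\text{min}}\geq K-D-1$, and combined with the first paragraph, $D+1+(\frac{a}{b})_{\text{min}}=K$.

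I expect the only real obstacle to be the reduction in the second paragraph: recognizing that ``$\frac{a}{b}<K-D-1$'' is equivalent to the presence of a divisor of $bK$ in a short interval, by exploiting that $g=\gcd(bK,n)$ divides $bK$ so that the next multiple of $g$ above $n$ is at most $bK$. Everything after that is elementary floor-function arithmetic, and it is worth noting in passing that only the left endpoints of $\mathcal{D}_l$ and $\mathcal{U}_l$ in \eqref{con1}--\eqref{con2} are actually used; the degenerate sub-case $U+D=K-1$, where the interval $[b(U+1),b(K-D-1)]$ is empty, is also covered automatically.
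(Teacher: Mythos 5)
Your proof is correct, and at bottom it uses the same elementary fact the paper invokes only parenthetically --- that the gcd of two distinct positive integers divides, and hence is at most, their difference --- together with the left endpoints of $\mathcal{D}_l$ and $\mathcal{U}_l$. But your organization is genuinely different and tighter. The paper first restricts $b\in[1:l+1]$ via the constraint $b\leq\lfloor K/(U+1)\rfloor$ from \cite{VaR5}, writes $b(D+1)=(b-1)K+\tilde a$ using $D\geq\lfloor lK/(l+1)\rfloor$, and then asserts outright that $\gcd(Kb,K(b-1)+\tilde a+a)=K$, from which $\tilde a+a=K$ follows; the justification of that gcd value is left implicit (one must still observe that a divisor of $bK$ exceeding $b(U+1)>bK/(l+2)$ has the form $bK/m$ with $m\leq l+1$, and that $b(D+1)+a$ is too close to $bK$ to be a proper multiple of it). Your version dispenses with both the bound on $b$ and the $\tilde a$ bookkeeping: assuming $a<b(K-D-1)$, the divisibility argument produces a divisor $g$ of $bK$ in $[b(U+1),\,b(K-D-1)]$, and the endpoint estimates $U+1>K/(l+2)$ and $K-D-1<K/(l+1)$ force the integer $bK/g$ to lie strictly between $l+1$ and $l+2$, which is impossible. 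This recasts the paper's squeeze as the nonexistence of a divisor of $bK$ in an open interval between consecutive aliquot values, works for every $b\geq1$, and additionally makes explicit the achievability half (that $a=b(K-D-1)$ always yields a member of $\mathbf{S}$, so $(\frac{a}{b})_{\text{min}}\leq K-D-1$), which the paper's proof takes for granted. The only implicit hypothesis you rely on is $a\geq0$ (needed for $g\leq b(K-D-1)$), which is the paper's standing convention for members of $\mathbf{S}$.
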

\begin{proof}
Let $U \in \mathcal{U}_l$. We have $$b \in \{1,2,\ldots,\left \lfloor \frac{K}{U+1} \right \rfloor\}=\{1,2,\ldots,l+1\}.$$ Let $D=\left \lfloor \frac{lK}{l+1} \right \rfloor + t \in \mathcal{D}_l$ for some $t \in \mathbb{Z}^+$. We have
\begin{align*}
b(D+1)&=b(\left \lfloor \frac{lK}{l+1} \right \rfloor + t+1) > \frac{blK}{l+1}=b\left(\frac{l+1-1}{l+1}K \right)\\&=bK-\frac{bK}{l+1} 
\geq (b-1)K.
\end{align*} 

The last inequality in the above equation follows from the fact that $b \in [1:l+1]$. Let $b(D+1)=(b-1)K+\tilde{a}$ for $\tilde{a} \in \mathbb{Z}^+$. We have $\text{gcd}(Kb,b(D+1)+a)=\text{gcd}(Kb,K(b-1)+\tilde{a}+a)=K.$ 
This is possible only if $\tilde{a}+a=K$ (gcd of any two numbers $\alpha,\gamma~(\alpha <\gamma)$ is always less than or equal to  $\alpha-\gamma$). Hence $b(D+1)+a=Kb$ and $\frac{b(D+1)+a}{b}=K$. This completes the proof. 
\end{proof}

The sets $\mathcal{D}_l$ and $\mathcal{U}_l$ for $K=71$ and $l \in [1:5]$ are given in Table \ref{table4}. The value of $D+1+(\frac{a}{b})_{\text{min}}$ is given in Table \ref{table5} for 
\begin{itemize}
\item $D \in \mathcal{D}_l$ and $U \in \mathcal{U}_l$.
\item $D \in \mathcal{D}_l$ and $U \notin \mathcal{U}_l$.
\item $D \notin \mathcal{D}_l$ and $U \in \mathcal{U}_l$.
\item $D \notin \mathcal{D}_l$ and $U \notin \mathcal{U}_l$.
\end{itemize}
\begin{table*}
\centering
\setlength\extrarowheight{10pt}
\begin{tabular}{|c|c|c|}
\hline
$l$&$\mathcal{D}_l$ & $\mathcal{U}_l$\\
\hline
$1$&$\left[ \left \lfloor \frac{K}{2} \right \rfloor :\left \lfloor \frac{2K}{3} \right \rfloor-1 \right]=[35:46]$ & $\left [ \left \lfloor \frac{K}{3} \right \rfloor :  \left \lfloor \frac{K}{2} \right \rfloor -1\right]=[23:34]$ \\
\hline
$2$&$\left[ \left \lfloor \frac{2K}{3} \right \rfloor :\left \lfloor \frac{3K}{4} \right \rfloor-1 \right]=[47:52]$ & $\left [ \left \lfloor \frac{K}{4} \right \rfloor :  \left \lfloor \frac{K}{3} \right \rfloor -1\right]=[17:22]$ \\
\hline
$3$&$\left[ \left \lfloor \frac{3K}{4} \right \rfloor :\left \lfloor \frac{4K}{5} \right \rfloor-1 \right]=[53:55]$ & $\left [ \left \lfloor \frac{K}{5} \right \rfloor :  \left \lfloor \frac{K}{4} \right \rfloor -1\right]=[14:16]$ \\
\hline
$4$&$\left[ \left \lfloor \frac{4K}{5} \right \rfloor :\left \lfloor \frac{5K}{6} \right \rfloor-1 \right]=[56:58]$ &  $\left [ \left \lfloor \frac{K}{6} \right \rfloor :  \left \lfloor \frac{K}{5} \right \rfloor -1\right]=[11:13]$ \\
\hline
$5$&$\left[ \left \lfloor \frac{5K}{6} \right \rfloor :\left \lfloor \frac{6K}{7} \right \rfloor-1 \right]=[59:59]$&$\left [ \left \lfloor \frac{K}{7} \right \rfloor :  \left \lfloor \frac{K}{6} \right \rfloor -1\right]=[10:10]$ \\
\hline

\end{tabular}
\vspace{5pt}
\caption{}
\label{table4}
\vspace{-5pt}
\end{table*}

\begin{table*}
\centering
\setlength\extrarowheight{2.5pt}
\begin{tabular}{|c|c|c|c|c|c|c|c|}
\hline
$K$ &$D$&$U$&$a$&$b$&$D+1$& $D+1+(\frac{a}{b})_{\text{min}}$ & \text{Remark}\\
\hline
71 & 44 & 1,2,3,4,5&2&11&45&45.1818 &  $D \in \mathcal{D}_l$ and $U \notin \mathcal{U}_l$\\
\hline
71 & 44 & $6,7,\ldots,22$&7&3&45&47.3333 & $D \in \mathcal{D}_l$ and $U \notin \mathcal{U}_l$ \\
\hline
71 & 44 & 23,24,25,26&26&1&45&71& $D \in \mathcal{D}_l$ and $U \in \mathcal{U}_l$ \\
\hline
71 & 45 & 1,2&3&20&46&46.1500& $D \in \mathcal{D}_l$ and $U \notin \mathcal{U}_l$ \\
\hline
71 & 45 & $3,4,\ldots,22$&4&3&46&47.3333& $D \in \mathcal{D}_l$ and $U \notin \mathcal{U}_l$ \\
\hline
71 & 45 & 23,24,25,26&25&1&46&71& $D \in \mathcal{D}_l$ and $U \in \mathcal{U}_l$ \\
\hline
71 & 27 & 27 &15&2&28&35.5& $D \notin \mathcal{D}_l$ and $U \in \mathcal{U}_l$ \\
\hline
71 & 33 & 25&3&2&34&35.5& $D \notin \mathcal{D}_l$ and $U \in \mathcal{U}_l$ \\
\hline
71 & 15 & 2 &3&22&16&16.1363& $D \notin \mathcal{D}_l$ and $U \notin \mathcal{U}_l$ \\
\hline
71 & 3 &1&2&35&4&4.0571& $D \notin \mathcal{D}_l$ and $U \notin \mathcal{U}_l$ \\
\hline
\end{tabular}
\vspace{5pt}
\caption{$D+1+(\frac{a}{b})_{\text{min}}$ for $K=71$.}
\label{table5}
\vspace{-5pt}
\end{table*}

\begin{theorem}
\label{nub}
Let  $l_1=D+1+(\frac{a}{b})_{\text{min}}$, where $(\frac{a}{b})_{\text{min}}$ is the minimum value of $a$ and $b$ such that $(a,b) \in \mathbf{S}$ given in \eqref{set11} and $\frac{a}{b}$ is minimum. Let  $l_2=D+1+(a+b)_{\text{min}}$, where $(a+b)_{\text{min}}$ is the minimum value of $a$ and $b$ satisfying \eqref{eq10} and $(a+b)_{\text{min}}$ is minimum. The upperbound on the broadcast rate of SUICP(SNI) is given by 
\begin{align*}
\beta(G) \leq \text{min}(l_1,l_2,D+U+1).
\end{align*}
\end{theorem}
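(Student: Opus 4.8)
The plan is to assemble the bound directly from three facts already established in the excerpt, each of which provides a valid index code of a certain length; the broadcast rate is then at most the smallest of these lengths. First I would observe that Theorem \ref{novcccthmv1} (equivalently the construction of \cite{VaR5}) shows that whenever $(a,b) \in \mathbf{S}$, a $b$-dimensional vector linear index code of rate $D+1+\frac{a}{b}$ exists for the SUICP(SNI); taking the pair $(a,b) \in \mathbf{S}$ for which $\frac{a}{b}$ is minimized yields a valid rate $l_1 = D+1+\left(\frac{a}{b}\right)_{\text{min}}$, hence $\beta(G) \le l_1$.

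Second I would invoke Theorem \ref{thm10}: for any non-negative integers $a,b$ satisfying \eqref{eq10}, an AIR matrix of size $(K+a)\times(D+1+a+b)$ serves as an encoding matrix producing a scalar linear index code of length $D+1+a+b$. Since a length-$N$ scalar code certifies $\beta(G)\le N$, minimizing $a+b$ over all valid pairs gives $\beta(G) \le l_2 = D+1+(a+b)_{\text{min}}$. Third, Lemma \ref{lemma4} supplies, for arbitrary $K,D,U$, a scalar linear index code of length $D+U+1$ via an AIR matrix of size $K\times(D+U+1)$, so $\beta(G) \le D+U+1$. Combining the three inequalities, $\beta(G)$ is bounded above by each of $l_1$, $l_2$, and $D+U+1$, hence by their minimum, which is the claim.

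The argument is essentially a bookkeeping step — the real content lies in the three results being assumed — so there is no genuine obstacle; the only point needing a word of care is confirming that the minimizations defining $l_1$ and $l_2$ are over non-empty sets (for $l_1$, note $(a,b)=(K(U+1)-? ,\cdot)$ or more simply that $\mathbf{S}$ is non-empty since one can always choose $a$ so that $b(D+1)+a$ is a multiple of $bK$; for $l_2$, the pair realizing Lemma \ref{lemma4} — or more directly $a$ large enough to force $\gcd(K+a,D+1+a+b)=K+a$ — satisfies \eqref{eq10}). Once non-emptiness of these index sets is noted, the three code constructions are simultaneously available and the minimum of their lengths is an upper bound on $\beta(G)$, completing the proof.
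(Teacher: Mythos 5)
Your proposal is correct and follows essentially the same route as the paper, whose proof is exactly the one-line observation that the bound follows from Theorem \ref{novcccthmv1}, Theorem \ref{thm10} and Lemma \ref{lemma4}. Your added remark on the non-emptiness of the sets over which $l_1$ and $l_2$ are minimized is a small extra care step not spelled out in the paper, but otherwise the arguments coincide.
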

\begin{proof}
Proof follows from Theorem \ref{novcccthmv1}, Theorem \ref{thm10} and Lemma \ref{lemma4}.
\end{proof}
\begin{example}
Consider an SUICP(SNI) with $K=71,D=44$ and $U=23$. From Lemma \ref{lemmanew}, we have $l_1=71$. From Theorem \ref{thm10}, we have $l_2=71$. From Lemma \ref{lemma4}, we have $D+U+1=68$. Hence from Theorem \ref{nub}, the upperbound is given by
\begin{align*}
\beta \leq \text{min}(l_1,l_2,D+U+1)=68.
\end{align*} 
\end{example}
\begin{example}
Consider an SUICP(SNI) with $K=71,D=52$ and $U=16$. From Lemma \ref{lemmanew}, we have $l_1=71$. From Theorem \ref{thm10}, we have $l_2=54$. From Lemma \ref{lemma4}, we have $D+U+1=69$. Hence, from Theorem \ref{nub}, the upperbound is given by
\begin{align*}
\beta \leq \text{min}(l_1,l_2,D+U+1)=54.
\end{align*} 
\end{example}
\section{Discussion}
\label{sec4}
In this paper, we gave near-optimal vector linear index codes for SUICP(SNI) for receivers with small buffer size. We gave an improved upperbound on the broadcast rate of SUICP(SNI). The broadcast rate and optimal coding for SUICP(SNI) with arbitrary $K,D$ and $U$ is a challenging open problem.

\section*{Acknowledgment}
This work was supported partly by the Science and Engineering Research Board (SERB) of Department of Science and Technology (DST), Government of India, through J.C. Bose National Fellowship to B. Sundar Rajan.

%%%%%%%%%%%%%%%%%%%%%%%%%%%%%%%%%%%%%%%%%%%%%%%%%%%%%%%%%%%%

%%%%%%%%%%%%%%%%%%%%%%%%%%%%%%%%%%%%%%%%%%%%%%%
\end{document}